\newtheorem{lemma}{Lemma}
\begin{document}
%
\title{A Large-System Analysis of the Imperfect-CSIT Gaussian Broadcast Channel
with a DPC-based Transmission Strategy}

\author{Chinmay S.~Vaze %
        and~Mahesh K.~Varanasi
\thanks{This work was supported in part by NSF Grant CCF-0728955.
The authors are with the Department of Electrical, Computer, and
Energy Engineering, University of Colorado, Boulder, CO 80309-0425
USA (e-mail: {vaze, varanasi}@colorado.edu).} }

%

\markboth{}{C. S. Vaze and M. K. Varanasi: Short Title}


\maketitle

\begin{abstract}
The Gaussian broadcast channel (GBC) with $K$ transmit antennas and
$K$ single-antenna users is considered for the case in which the
channel state information is obtained at the transmitter via a
finite-rate feedback link of capacity $r$ bits per user. The
throughput (i.e., the sum-rate normalized by $K$) of the GBC is
analyzed in the limit as $K \to \infty$ with $\frac{r}{K} \to
\bar{r}$. Considering the transmission strategy of zeroforcing dirty
paper coding (ZFDPC), a closed-form expression for the asymptotic
throughput is derived. It is observed that, even under the
finite-rate feedback setting, ZFDPC achieves a significantly higher
throughput than zeroforcing beamforming. Using the asymptotic
throughput expression, the problem of obtaining the number of users
to be selected in order to maximize the throughput is solved.
\end{abstract}

\begin{IEEEkeywords}
broadcast channel, dirty paper coding, inflation factor, zeroforcing
beamforming.
\end{IEEEkeywords}


\section{Introduction}
\IEEEPARstart{T}{he} Gaussian broadcast channel (GBC) has been
intensely researched in recent
years. It is well-known that dirty paper coding (DPC) \cite{Costa}
achieves the capacity region of the GBC if perfect channel state
information (CSI) is available at the transmitter (CSIT) and the
receivers (CSIR) \cite{Shamai-W-S}. However, even though the
assumption of perfect CSIR can be justified, it is unrealistic to
assume the same about CSIT. Moreover, the rate achievable over the
GBC is quite sensitive to the quality of CSIT as has been
demonstrated in \cite{Caire, Jindal, Vaze_Dof_final} (see also
references in \cite{Vaze_Dof_final}). This paper therefore tackles
the important problem of achieving high throughputs using DPC over
the GBC with imperfect CSIT.

It is well known that under perfect CSIT the DPC based transmission
outperforms other known strategies such as zeroforcing beamforming
(ZFBF) \cite{Caire}. Nevertheless, under {\em imperfect} CSIT, it is
the ZFBF strategy that has been intensely researched rather than DPC,
mainly because ZFBF is analytical tractable \cite{Jindal,Wei_GBC}
and because of a perception that DPC based schemes are either not feasible without perfect CSIT
or, even if feasible, they may be analytically intractable. The main hurdle with DPC is seen to be the
difficulty of designing the inflation factor -- a parameter that
can critically affect its performance \cite{Costa} -- without perfect CSIT;
it is even generally believed that the inflation factor cannot be effectively
designed without perfect CSIT \cite{Jindal} implying that DPC
may be overly sensitive to the imperfection in CSIT, thereby rendering it less desirable
than even ZFBF.

Making progress to this end, we recently developed iterative numerical algorithms for the
determination of inflation factor under imperfect CSIT which yield high achievable rates
\cite{Vaze,Vaze2}. Some analytical results were also obtained in the high/low SNR
(signal-to-noise ratio) regime \cite{Vaze_DPC_GBC,Vaze_fb_scaling_GBC}.  However,
these results may not always reveal much insight on how DPC works with imperfect CSIT
nor does it shed light on the behavior of DPC at moderate values of SNR.
Moreover, due to the numerical nature of these algorithms, it is almost
impossible to derive analytical results regarding the finite SNR performance of DPC based strategies
or on how they compare with other transmission strategies such as ZFBF. Furthermore,
the algorithms don't lend themselves to answering important design questions about DPC based schemes --
such as optimizing the sum-rate by selecting (and transmitting to) only a subset of users -- other than through a tedious and un-insightful exhaustive search. Recall that the strategy of transmitting to a subset of users
is known to indeed result in a considerable improvement in the sum-rate under
perfect CSIT \cite{Caire} and for the ZFBF even under imperfect CSIT \cite{Wei_GBC}.

To address the above issues, we undertake here a large-system or asymptotic analysis
of the GBC with $K$ transmit antennas and $K$ single-antenna users (i.e., the GBC of
size/dimension $K$) in which the CSIT is obtained via a finite-rate
feedback link of capacity $r$ bits per user per channel realization (or coherence interval).
In particular, for the transmission strategy of zeroforcing DPC (ZFDPC) \cite{Caire,
Vaze_DPC_GBC}, we analyze the normalized sum-rate or the throughput
(i.e., the sum-rate divided by $K$) of the GBC in the limit as $K
\to \infty$ with $\frac{r}{K} \to \bar{r}$. Such a problem has been
considered before in the special case of perfect CSIT (i.e.,
$\bar{r} = \infty$) in \cite{Caire} and for the finite-rate feedback
GBC with the simpler transmission strategy of ZFBF in \cite{Wei_GBC}.

\begin{figure*}[!b]
\begin{picture}(10,1)
\put(10,10){\line(1,0){500}}
\end{picture} \vspace{-14pt}
\begin{eqnarray}
R_i = \mathbb{E}_{\hat{H}} \max_{W_i} \mathbb{E}_{H|\hat{H}} \log
\frac{\mathrm{nr}}{\mathrm{nr} \cdot (1 + ||W_i||^2) - \frac{P}{s}
\Big| h_i^* \big(\omega_i + [\omega_1 \cdots \omega_{i-1}]W_i^*
\big)  \Big|^2}  \mbox{ where } \mathrm{nr} = 1 + \frac{P}{s}
\sum_{j=1}^{K} |h_i^* \omega_j|^2. \label{Achievable Rate R_i}\\
W_i = \frac{P}{s} \mathbb{E}_{H|\hat{H}} \Big(\omega_i^* h_i h_i^*
[\omega_1 \cdots \omega_{i-1}] \Big) M_i^{-1} \mbox{ where } M_i =
\mathbb{E}_{H|\hat{H}} \Big(\mathrm{nr} \cdot I_{i-1} - \frac{P}{s}
[\omega_1 \cdots \omega_{i-1}]^* h_i h_i^* [\omega_1 \cdots
\omega_{i-1}] \Big). \label{Opt Choice W}
\end{eqnarray}
\end{figure*}

In the large-system limit, the involved random variables converge to
their deterministic limits \cite{TulinoVerdu}. Therefore, the
large-system analysis yields a closed-form expression for the
asymptotic throughput (i.e., the throughput in the limit of $K \to
\infty$), the evaluation of which involves a simple easy-to-compute
numerical integral. Importantly, unlike many works that deal with high SNR
characterizations (c.f., \cite{Vaze_Dof_final} and the references therein)
the asymptotic throughput is obtained as a function of SNR, and hence, it can
provide insights at any finite SNR. It also serves as a simple
semi-analytic tool for the comparison of different transmission
strategies. In particular, contrary to popular belief, we show that even under imperfect
CSIT ZFDPC does indeed achieve a significantly higher throughput than ZFBF.
Furthermore, the asymptotic analysis helps to definitively answer the design question
of optimizing over the number of users to be transmitted to. 
It is also een that this method when mapped simply to finite dimensions
works quite accurately even for the relatively small values of $K$.
Thus the asymptotic analysis is seen to offer useful insights about
finite-dimensional GBCs as well.

\emph{\underline{Notations:} } For a matrix/vector $A$, $A^*$ is its
complex-conjugate transpose. $\mathcal{C}\mathcal{N}(0,1)$ denotes
the circularly symmetric complex normal mean-$0$ variance-$1$ random
variable (RV), while $\chi^2_{2K}$ denotes the chi-square RV of mean
$K$. $h \sim \mathcal{C}\mathcal{N}(K)$ denotes the vector $h$ of
dimension $K$ consisting of independent
$\mathcal{C}\mathcal{N}(0,1)$ RVs. For any vector $a$, $\tilde{a}$
denotes its direction, i.e., $\tilde{a} = \frac{a}{\|a\|}$, where $\| a \|$
denotes the norm of $a$. For a
vector $\hat{h}_i$, the perpendicular space of it and the
orthonormal basis vectors spanning the perpendicular space, both,
are denoted by $\hat{p}_i$ (the meaning is to be understood from the
context). Almost-sure convergence \cite{Pillai} is denoted by
a.s. $I_K \in \mathcal{C}^{K \times K}$ is an identity matrix.  For
RVs $A$ and $B$, $A \bot B$ denotes independence. All logarithms are
to base $2$.

\section{System Model of GBC} \label{SystemModel}
Consider the GBC of size $K$. The received signal at the $i^{th}$
user is given by $y_i = h_i^* x + z_i$, where $h_i^* \in
\mathbb{C}^{1 \times K}$ is the channel vector of the $i^{th}$ user,
$x \in \mathbb{C}^{K \times 1}$ is the signal transmitted under the
power constraint of $P$, and $z_i \sim \mathcal{C}\mathcal{N}(0,1)$
is the additive noise. We assume that $h_i \sim
\mathcal{C}\mathcal{N}(K)$ are independent. Let $\omega_i \in
\mathbb{C}^{K \times 1}$ denote the BF vector for the $i^{th}$ user and let
$\|\omega_i\| = 1$. Let $u_i$ be the data symbol to be sent to the
user $i$ ($u_i$'s are independent). Then the total transmitted
signal is given by $x = \sum_{i=1}^K \omega_i u_i$. Let $H = [h_1
\hspace{5pt} h_2 \hspace{5pt} h_3 \cdots h_K]$. We assume perfect
CSIR. Define SNR $=P$.

W consider the so-called `on-off' power allocation policy which is that
 the transmitter selects a set, denoted $\mathcal{A}_{on}$,  of `on' users
and transmits with equal power to the selected users. In fact, under
perfect CSIT, such a scheme is near optimal. To be precise, the
difference between the asymptotic throughput achieved with the
optimal waterfilling-type power allocation policy \cite{Caire} and
that obtained using the on-off power policy is negligible.
Hence, we consider here only the on-off power policy.
Under this scheme, if $i \in
\mathcal{A}_{on}$ then $u_i \sim \mathcal{C}\mathcal{N}(0,
\frac{P}{s})$, where $s := |\mathcal{A}_{on}|$, else $u_i = 0$. We
let $\frac{s}{K} \to \bar{s}$ as $K \to \infty$. Thus, the
asymptotic throughput is obtained as a function of $\bar{s}$, which
allows us to answer the design problem of optimization over the
fraction of users.


\subsection{Quantization Scheme}
In the limit of large $K$, RVs $\max_{1\leq i \leq K} \frac{1}{K}
||h_i||^2$ and $\min_{1\leq i \leq K} \frac{1}{K} ||h_i||^2$, both,
converge to $1$ in probability \cite[Proposition 1]{Wei_GBC}.
Therefore, we find it sufficient, for the present purpose, to
feedback only the channel directions, namely the $\tilde{h}_i$'s.
Let $r$ denote the number of feedback bits per user. Each user has a
codebook $\mathcal{C} = \{q_j\}_{j=1}^{2^r}$ consisting of $2^r$
$K$-dimensional unit-norm vectors. The vector $\tilde{h}_i$ is
quantized according to the rule: $ \hat{h}_i = \mathrm{arg}\min_{q_j
\in \mathcal{C}} \sin^2 \big (\angle(\tilde{h}_i,q_j)\big)$. Denote
by $d^2_c(i)$ (or simply $d_c^2$) the quantization error
$\sin^2\big(\angle(\tilde{h}_i,\hat{h}_i)\big)$. We further assume
that $\frac{r}{K} \to \bar{r}$. We define $\hat{H} = [\hat{h}_1
\hspace{5pt} \hat{h}_2 \hspace{5pt} \hat{h}_3 \cdots \hat{h}_K]$.

In this paper, we assume the quantization cell approximation (called
the quantization-cell upper-bound (QUB)) \cite{Mukkavilli}. Under
this approximation, we assume ideally that the quantization cell
around each vector of the codebook is a spherical cap of area
$2^{-r}$ times the total surface area of the unit sphere
\cite{Mukkavilli}. It has been shown that this approximation yields
an upper-bound to the performance \cite{Mukkavilli, YJ}; and this
upper-bound has been observed to be tight \cite{YJ}. The tightness
of the bound is a property that depends mainly on how the
quantization error is modeled under the approximation, and not so
much on the channel or the transmission scheme for which the
approximation is being used. Hence, QUB is a reasonable
assumption to make for the present analysis.

\begin{lemma} \label{lemma: Prop. of QUB}
If we write $\tilde{h}_i = \sqrt{1 - d_c^2(i)} \hat{h}_i +
\sqrt{d_c^2(i)} \tilde{e_i}$ then, under the QUB model, $\tilde{e}_i$ is
isotropically distributed in $\hat{p}_i$ and is independent of
$d_c^2(i)$. Also, $d_c^2(i) \to 2^{-\bar{r}} =: \bar{D}$ a.s.
\end{lemma}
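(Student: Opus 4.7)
My plan is to treat the three assertions separately, using the fact that QUB amounts to the following concrete probabilistic description: conditional on $\tilde{h}_i$ being quantized to $\hat{h}_i$, the direction $\tilde{h}_i$ is uniform on a spherical cap about $\hat{h}_i$ whose normalized surface measure equals $2^{-r}$. This is a direct consequence of the isotropy of $\tilde{h}_i$ (inherited from $h_i \sim \mathcal{CN}(K)$) combined with the spherical-cap model of the quantization cell.

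For the isotropy of $\tilde{e}_i$ in $\hat{p}_i$ and its independence of $d_c^2(i)$, I would give a single symmetry argument. The spherical cap is invariant under every unitary $U$ that fixes $\hat{h}_i$, so the uniform distribution on it is likewise $U$-invariant. Writing $\tilde{h}_i = \sqrt{1 - d_c^2(i)}\,\hat{h}_i + \sqrt{d_c^2(i)}\,\tilde{e}_i$, the action of such a $U$ leaves $d_c^2(i)$ unchanged (it depends only on $|\langle \tilde{h}_i, \hat{h}_i\rangle|^2$) and sends $\tilde{e}_i$ to $U\tilde{e}_i$. Hence the joint law of $(d_c^2(i), \tilde{e}_i)$ is invariant under $(c, v) \mapsto (c, Uv)$ for every such $U$. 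Since the group of unitaries fixing $\hat{h}_i$ acts transitively on the unit sphere inside $\hat{p}_i$, the conditional distribution of $\tilde{e}_i$ given $d_c^2(i) = c$ must coincide with the unique invariant distribution, namely the uniform one, for every $c$; this delivers both isotropy and the independence from $d_c^2(i)$ in one stroke.

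For the almost-sure convergence $d_c^2(i) \to \bar{D} = 2^{-\bar{r}}$, I would first pin down the cap angle. The standard identity $\Pr(\sin^2 \angle(\tilde{h}_i, q) \leq x) = x^{K-1}$ for any fixed unit codeword $q$ forces the cap of measure $2^{-r}$ to correspond to $\{x \leq \delta_K\}$ with $\delta_K := 2^{-r/(K-1)}$. Under QUB, $d_c^2(i)$ is therefore supported on $[0, \delta_K]$ with conditional CDF $(x/\delta_K)^{K-1}$. Since $r/K \to \bar{r}$, the deterministic endpoint satisfies $\delta_K \to \bar{D}$, so for $K$ large enough $d_c^2(i) \leq \bar{D} + \epsilon$ surely; in the other direction, $\Pr(d_c^2(i) < \bar{D} - \epsilon) = ((\bar{D} - \epsilon)/\delta_K)^{K-1}$ decays geometrically in $K$. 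Summability in $K$, together with setting up the sequence of problems on a common product space with independent channels and codebooks, then yields the a.s. convergence via Borel--Cantelli.

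The conceptually central step is the symmetry-based treatment of the first two claims, but the main technical point to get right is the geometric-decay estimate for the lower tail, which is what upgrades convergence in probability to the claimed almost-sure convergence; everything else either falls out of the explicit CDF under QUB or from the rotational invariance of the cap about $\hat{h}_i$.
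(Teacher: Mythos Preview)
The paper does not actually supply a proof of this lemma; it is stated as a known consequence of the QUB model (with implicit reference to \cite{Mukkavilli, YJ}) and then used freely thereafter. So there is no ``paper's own proof'' to compare against.

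That said, your argument is correct and, in spirit, matches the paper's toolkit. Your rotational-invariance step---acting by unitaries $U$ that fix $\hat{h}_i$ and concluding that the conditional law of $\tilde{e}_i$ given $d_c^2(i)$ is the unique $U$-invariant measure on the unit sphere in $\hat{p}_i$---is exactly the device the paper deploys later in the proof of Lemma~\ref{lemma: Cond Mean} (there it is used to show $\mathbb{E}\,\tilde{e}_i = 0$ and $\mathbb{E}\,\tilde{e}_i\tilde{e}_i^* = \frac{1}{K-1}\hat{p}_i\hat{p}_i^*$). Your treatment of the a.s.\ convergence is also sound: the QUB conditional CDF $(x/\delta_K)^{K-1}$ with $\delta_K = 2^{-r/(K-1)} \to 2^{-\bar{r}}$ gives a geometrically decaying lower tail, and Borel--Cantelli on a product space then upgrades convergence in probability to almost-sure convergence. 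Nothing is missing.
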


We consider here an ensemble of codebooks $\{\bar{U}\mathcal{C}\}$
where $\bar{U}$ is a Haar distributed unitary matrix and $\mathcal{C}$
is a given codebook. Then $\hat{h}_i$ is isotropic. We take the
expectation over the codebooks as well although this is not shown
explicitly in subsequent formulas.

\subsection{The Achievable Rate}
Assume that the users are encoded according to their natural order.
We focus here on ZFDPC scheme \cite{Caire, Vaze_DPC_GBC}. To obtain
the BF vectors under this scheme, we perform QR-decomposition
\cite{Horn-Johnson} of $H$ when there is perfect CSIT \cite{Caire};
i.e., let $H = QR$. The columns of $Q$ are respectively the BF
vectors of the users. Under finite-rate feedback, we perform the
same procedure with $\hat{H}$ \cite{Vaze_DPC_GBC}. Note that the BF
vectors are orthogonal and (under finite-rate feedback) $\hat{h}_i^*
\omega_j = 0$, $\forall j>i$.

We select the auxiliary random variable for the $i^{th}$ user as
$U_i = u_i + W_i [u_1^* \cdots u_{i-1}^*]^*$, where $W_i \in
\mathcal{C}^{1 \times (i-1)}$ is the inflation factor for the
$i^{th}$ user \cite{Costa, Vaze2, Vaze_DPC_GBC}. Then the achievable
rate for the $i^{th}$ user is given by equation (\ref{Achievable
Rate R_i}) below (see \cite{Vaze2}).

In this scheme, the interference (at user $i$) due to the users
encoded previously (i.e., users $1$ to $i-1$) is treated by DPC
whereas that due to the users encoded afterwards (i.e., users $i+1$
to $s$) is treated by zeroforcing; hence the name ZFDPC.

Our choice of the inflation factor is stated in equation (\ref{Opt
Choice W}), which is obtained from \cite{Vaze2} and
\cite{Vaze_DPC_GBC}. It is derived by first moving the conditional
expectation in equation (\ref{Achievable Rate R_i}) inside the
logarithm to obtain an upper-bound on the rate and then maximizing
this upper-bound over the inflation factor.

\section{Evaluation of the Asymptotic Throughput}
We want to evaluate the limit, $\lim_K \frac{1}{K} \sum_i R_i$,
where each $R_i$ depends on the user index $i$, and hence, on the
normalized user index $\frac{i}{s}$. In the limit, the normalized
user index would take a value from the continuum, i.e., $\bar{i} :=
\lim_K \frac{i}{s} \in [0,1]$. We thus anticipate that, as $K \to
\infty$, the above summation would converge to an integral over
$\bar{i}$ and the integrand of which would be the limit of $R_i$.

To compute the limit, we first need to evaluate the conditional
expectations in (\ref{Opt Choice W}). We start below with
$\mathbb{E}_{H|\hat{H}} (\tilde{h}_i \tilde{h}_i^*)$. Then we
compute $\mathbb{E}_{H|\hat{H}} \mathrm{nr}$ and $\lim_K
\mathrm{nr}$. To this end, note that $\mathrm{nr} = 1 + \frac{P}{s}
|h_i^* \omega_i|^2 + \frac{P}{s} \sum_{j<i} |h_i^* \omega_j|^2 +
\frac{P}{s} \sum_{k>i} |h_i^* \omega_k|^2$; each of these terms is
dealt separately in Subsections \ref{SubSec: Term 1 of nr} to
\ref{subsec: 3rd term of nr}, respectively. Later, in Subsection
\ref{Compute_Wfirst}, we find $W_i$ in closed form. Finally,
we evaluate the limit of the terms involved in the denominator of the
argument of the logarithm in (\ref{Achievable Rate R_i}). We define
$D = \mathbb{E}_{H|\hat{H}} d_c^2(i)$. Then $\lim_K D = \bar{D} =
2^{-\bar{r}}$.

\begin{lemma} \label{lemma: Cond Mean}
$ \mathbb{E}_{H|\hat{H}} (\tilde{h}_i \tilde{h}_i^*) = (1-D -
\frac{D}{K-1}) \hat{h}\hat{h}^* + \frac{D}{K-1} I_K$.
\end{lemma}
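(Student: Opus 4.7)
The plan is to substitute the decomposition from Lemma~\ref{lemma: Prop. of QUB}, expand the outer product $\tilde{h}_i \tilde{h}_i^*$ into four terms, and then take the conditional expectation term by term, using the structural properties of $\tilde{e}_i$ stated in that same lemma (independence from $d_c^2(i)$ and isotropy within the subspace $\hat{p}_i$).

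Concretely, I would first write
\begin{equation*}
\tilde{h}_i \tilde{h}_i^* = (1-d_c^2)\,\hat{h}_i \hat{h}_i^* + \sqrt{(1-d_c^2)d_c^2}\,\hat{h}_i \tilde{e}_i^* + \sqrt{(1-d_c^2)d_c^2}\,\tilde{e}_i \hat{h}_i^* + d_c^2\,\tilde{e}_i \tilde{e}_i^*.
\end{equation*}
Conditioning on $\hat{H}$ fixes $\hat{h}_i$ and the perpendicular space $\hat{p}_i$. By the independence of $\tilde{e}_i$ and $d_c^2(i)$, the two cross terms reduce to scalar factors multiplied by $\mathbb{E}_{H|\hat{H}}[\tilde{e}_i]$, which vanishes because $\tilde{e}_i$ is isotropically distributed in $\hat{p}_i$ (a zero-mean distribution). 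The first term contributes $(1-D)\hat{h}_i \hat{h}_i^*$ by the definition of $D$.

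The main computation is the fourth term. The key fact I would invoke is that for a vector that is isotropically distributed on the unit sphere of a $(K-1)$-dimensional subspace $\hat{p}_i$, the covariance equals $\tfrac{1}{K-1}$ times the orthogonal projector onto that subspace, i.e.\ $\mathbb{E}_{H|\hat{H}}[\tilde{e}_i \tilde{e}_i^*] = \tfrac{1}{K-1}(I_K - \hat{h}_i \hat{h}_i^*)$. Combined with independence from $d_c^2$, the fourth term yields $\tfrac{D}{K-1}(I_K - \hat{h}_i \hat{h}_i^*)$. Summing the surviving contributions and regrouping the $\hat{h}_i \hat{h}_i^*$ pieces produces exactly $(1 - D - \tfrac{D}{K-1})\hat{h}_i \hat{h}_i^* + \tfrac{D}{K-1} I_K$.

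There is essentially no obstacle once Lemma~\ref{lemma: Prop. of QUB} is in hand; the only non-trivial ingredient is the identity for the covariance of an isotropic unit vector restricted to a $(K-1)$-dimensional subspace, which follows from a short symmetry argument (the covariance must commute with every unitary stabilizing $\hat{p}_i$, hence must be a scalar multiple of the projector, and taking the trace fixes the scalar as $\tfrac{1}{K-1}$).
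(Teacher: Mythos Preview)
Your proposal is correct and follows essentially the same route as the paper: both arguments expand via the decomposition of Lemma~\ref{lemma: Prop. of QUB}, use isotropy of $\tilde{e}_i$ to kill the cross terms, and identify $\mathbb{E}_{H|\hat{H}}[\tilde{e}_i\tilde{e}_i^*]$ as $\tfrac{1}{K-1}$ times the projector onto $\hat{p}_i$ via a unitary-invariance (symmetry plus trace) argument. The paper merely spells out the unitary $U$ explicitly, while you state the symmetry principle more abstractly; substantively there is no difference.
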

\begin{proof}
We first prove that $\mathbb{E} \tilde{e}_i = 0$. To this end,
consider the unitary matrix $U = [\hat{h}_i \hspace{5pt}
\hat{p}_i]\left[ \begin{array}{cc}
1 & 0 \\
0 & V \\
\end{array} \right] \left[ \begin{array}{c}
\hat{h}_i^* \\
\hat{p}_i^* \\
\end{array} \right]$,
where $V \in \mathcal{C}^{(K-1) \times (K-1)}$ is any arbitrary
unitary matrix. Now, $U\hat{h}_i = \hat{h}_i$ and $U\hat{p}_i =
\hat{p}_i V$. Hence, $U \tilde{e}_i$ is isotropic in $\hat{p}_i$,
which implies that $U \mathbb{E}_{H|\hat{H}} (\tilde{e}_i) =
\mathbb{E}_{H|\hat{H}} (\tilde{e}_i) = 0$.

Now, $\mathbb{E}_{H|\hat{H}} (\tilde{e}_i \tilde{e}_i^*)$ must be of
the form $\hat{p}_i Q \hat{p}_i^*$ where $Q$ is positive
semi-definite matrix. We can prove that $U \mathbb{E}_{H|\hat{H}}
(\tilde{e}_i \tilde{e}_i^*) U^* = \mathbb{E}_{H|\hat{H}}
(\tilde{e}_i \tilde{e}_i^*) \Rightarrow VQV^* = Q$, $\forall$
unitary $V$. This implies that $Q = k \cdot I_{K-1}$ with $k$ chosen
such that $\mathrm{tr}(Q) = 1$. Hence $ \mathbb{E}_{H|\hat{H}}
(\tilde{e}_i \tilde{e}_i^*) = \hat{p}_i \frac{1}{K-1} I_{K-1}
\hat{p}_i^*$.

Now using the decomposition of $\tilde{h}_i$ given in Lemma
\ref{lemma: Prop. of QUB} and noting the fact that $\hat{p}_i
\hat{p}_i^* = I_K - \hat{h}_i \hat{h}_i^*$ we obtain the result.
\end{proof}

\subsection{Analysis of $\frac{1}{s} |h_i^* \omega_i|^2 $}
\label{SubSec: Term 1 of nr} \vspace{-5mm}
\begin{eqnarray*}
\lefteqn{\frac{1}{s} |h_i^* \omega_i|^2 =  \frac{1}{s} \left|h_i^*
\left[
\begin{array}{cc}
\hat{h}_i & \hat{p}_i \\
\end{array} \right] \left[ \begin{array}{cc}
\hat{h}_i & \hat{p}_i \\
\end{array} \right]^* \omega_i \right|^2 }\\
&& {} = \frac{1}{s} \Big\{ \Big| h_i^* \hat{h}_i \hat{h}_i^*
\omega_i \Big|^2 + \Big| h_i^* \hat{p}_i \hat{p}_i^* \omega_i
\Big|^2 + \mbox{ cross terms} \Big\}.
\end{eqnarray*}
1) $\frac{1}{s} |h_i^* \hat{h}_i \hat{h}_i^* \omega_i|^2 =
\frac{||h_i||^2}{s} |\tilde{h}_i^* \hat{h}_i|^2 |\hat{h}_i
\omega_i|^2$. Therefore, $ \mathbb{E}_{H|\hat{H}} \frac{1}{s} |h_i^*
\hat{h}_i \hat{h}_i^* \omega_i|^2 = \frac{K}{s} (1-D) |\hat{h}_i
\omega_i|^2$.

We know $|\tilde{h}_i^* \hat{h}_i|^2 \to 1-\bar{D}$ a.s. Note that
$\Big| ||h_i|| \cdot \hat{h}^*_i \omega_i \Big| ^2 \sim
\chi^2_{2(K-i+1)}$ \cite{TulinoVerdu}. Hence, $\Big|\hat{h}^*_i
\omega_i \Big| ^2 = \frac{s}{||h_i||^2} \frac{1}{s} \Big| ||h_i||
\hat{h}^*_i \omega_i \Big| ^2 \to (1 - \bar{i}\bar{s})$ a.s.
Therefore $\frac{1}{s} |h_i^* \hat{h}_i \hat{h}_i^* \omega_i|^2 \to
\frac{1}{\bar{s}} (1 - \bar{D}) (1 - \bar{i}\bar{s})$ a.s.

2) Now $\hat{p}_i \hat{p}_i^* \tilde{h}_i = e_i$ with
$\mathbb{E}_{H|\hat{H}} ||e_i||^2 = D$. Let us consider $h' \sim
\mathcal{C}\mathcal{N}(K-1)$ $\bot \tilde{e}_i$. Let $a_i =
\hat{p}_i \hat{p}_i^* \omega_i$; $||a_i||^2 = 1 - |\hat{h}_i^*
\omega_i|^2$. Conditioned on $\hat{H}$, $\tilde{a}_i \in \hat{p}_i$
is a deterministic direction. Hence, conditioned on $\hat{H}$,
$\tilde{a}_i^* \cdot ||h'|| \tilde{e}_i \sim
\mathcal{C}\mathcal{N}(0,1)$. Therefore, \newline
$\mathbb{E}_{H|\hat{H}} \Big|\tilde{e}_i^* \omega_i \Big|^2  =
\frac{||a_i||^2}{K-1} \mathbb{E}_{H|\hat{H}} \Big| \tilde{a}_i^*
\cdot ||h'|| \tilde{e}_i \Big|^2  = \frac{1 - |\hat{h}_i^*
\omega_i|^2}{K-1}$ and $\mathbb{E}_{H|\hat{H}} \frac{1}{s}|h_i^*
\hat{p}_i \hat{p}_i^* \omega_i |^2 = \frac{D K}{s(K-1)}(1 -
|\hat{h}_i^* \omega_i|^2)$.

To compute the limit, note that $\tilde{a}_i^* \tilde{e}_i$ behaves
like $\frac{\mathcal{C}\mathcal{N}(0,1)}{K-1} $, as far as the limit
is concerned. Since the other multiplicative terms remain bounded in
limit, this term converges to zero a.s.

3) One of the cross terms is $||h_i||^2 \hspace{2pt} \tilde{h}_i^*
\hat{h}_i \hat{h}_i^* \omega_i \omega_i^* \hat{p}_i \hat{p}_i^*
\tilde{h}_i$. Now, $\tilde{h}_i^* \hat{h}_i = \sqrt{1 - d_c^2(i)}$;
$\hat{p}_i \hat{p}_i^* \tilde{h}_i = \sqrt{d_c^2(i)} \tilde{e}_i$.
Since $d_c(i) \bot \tilde{e}_i$ and $\mathbb{E}_{H|\hat{H}}
\tilde{e}_i = 0$, the conditional expectation of the cross terms is
zero. Their limit can also be shown to be zero a.s.

\subsection{Analysis of $\frac{1}{s}\sum_{j<i} |h_i^*
\omega_j|^2$}

The conditional expectation can be computed using the techniques
developed in Subsection \ref{SubSec: Term 1 of nr}. We directly
state the main result. Note that $|\hat{h}_i^* \omega_i|^2 = 1 -
\sum_{j<i} |\hat{h}_i^* \omega_j|^2$ and \newline $ |h_i^*
\omega_j|^2 = |h_i^* \hat{h}_i \hat{h}_i^* \omega_j|^2 + |h_i^*
\hat{p}_i \hat{p}_i^* \omega_j |^2 + \mbox{ cross terms}$. Then
\vspace{3pt} \newline $\mathbb{E}_{H|\hat{H}} \sum_{j<i} |h_i^*
\hat{h}_i \hat{h}_i^* \omega_j|^2 = K(1-D) (1 - |\hat{h}_i^*
\omega_i|^2)$. \vspace{2pt}
\newline $\mathbb{E}_{H|\hat{H}} \sum_{j<i} |h_i^*
\hat{p}_i \hat{p}_i^* \omega_j |^2 = \frac{K D}{(K-1)} \Big(i-2 +
|\hat{h}_i^*\omega_i|^2 \Big)$. \vspace{4pt}

To compute the limit, we have $\omega_j \bot h_i$ $\forall j<i$.
Thus, $h_i^* \omega_j$ are independent $\sim
\mathcal{C}\mathcal{N}(0,1)$ random variables. Hence
$\frac{1}{s}\sum_{j<i} |h_i^* \omega_j|^2 \to \bar{i}$ a.s.

\subsection{Analysis of $\frac{1}{s} \sum_{k>i} | h_i^*
\omega_k|^2$} \label{subsec: 3rd term of nr}

To compute the conditional expectation, the techniques developed in
Subsection \ref{SubSec: Term 1 of nr} are used. We omit the details
and state the main result: $ \frac{1}{s} \mathbb{E}_{H|\hat{H}}
\sum_{k>i} | h_i^* \omega_k|^2 = \frac{K}{K-1} D \frac{s-i}{s}.$

Since $\omega_k \in \hat{p}_i$, we get $| h_i^* \omega_k|^2 =
||h_i||^2 \cdot ||e_i||^2 \cdot  |\omega_k^* \tilde{e}_i|^2$. As in
Subsection \ref{SubSec: Term 1 of nr}, introduce $h' \sim
\mathcal{C}\mathcal{N}(K-1) \bot \tilde{e}_i$. Then conditioned on
$\hat{H}$, $\sum_{k>i} \Big| \omega_k^* \cdot ||h'|| \tilde{e}_i
\Big| \sim \chi^2_{2(s-i)}$; hence unconditionally also, it has the
same distribution. This gives $\frac{1}{s} \sum_{k>i} | h_i^*
\omega_k|^2 \to \bar{D} (1-\bar{i})$ a.s.

\begin{figure*}[!b]
\begin{picture}(10,1)
\put(10,10){\line(1,0){500}}
\end{picture} \vspace{-12pt}
\begin{eqnarray}
\rho (P,\bar{s}, \bar{r}) = \bar{s} \left\{ \log \mathrm{NR} -
\int_0^1 \log \left( \mathrm{NR} \cdot x^{\infty}(\bar{i}) -
\frac{P}{\bar{s}}(1-2^{-\bar{r}}) (1-\bar{i}\bar{s})
\Big(\frac{\frac{P}{\bar{s}} (1-2^{-\bar{r}}) (\bar{i}\bar{s})}{1 +
\frac{P}{\bar{s}}(1-2^{-\bar{r}}) (1-\bar{i}\bar{s}) +
P2^{-\bar{r}}} + 1 \Big)^2 \right)  \,d\bar{i} \right\},
\label{Asymp. Through. Wopt} \\
&& {} \hspace{-18.2cm} \hspace{-1pt} \mbox{where } \mathrm{NR}
\hspace{-1pt} := \hspace{-2pt} \lim_{K \to \infty} \hspace{-1pt}
\mathrm{nr} = 1 \hspace{-1pt} + \hspace{-1pt} \frac{P}{\bar{s}}(1-
2^{-\bar{r}}) + P 2^{-\bar{r}} \mbox{ is independent of $\bar{i}$;
and } x^{\infty}(\bar{i}) = 1 + \frac{(\frac{P}{\bar{s}})^2
(1-2^{-\bar{r}})^2 (1 - \bar{i}\bar{s}) (\bar{i}\bar{s}) }{\left(1 +
P 2^{-\bar{r}} + \frac{P}{\bar{s}} (1 -2^{-\bar{r}}) (1-
\bar{i}\bar{s}) \right)^2}. \nonumber
\end{eqnarray}
\end{figure*}


\subsection{Computation of the inflation factor}
\label{Compute_Wfirst} Let us define $l_i^* = \hat{h}_i^* [\omega_1
\hspace{5pt} \omega_2 \cdots \omega_{i-1}]$. Then we have:
\vspace{3pt}
\newline $\mathbb{E}_{H|\hat{H}} ( \omega_i^* \tilde{h}_i
\tilde{h}_i^* [\omega_1 \cdots \omega_{i-1}] )= (1-D -\frac{D}{K-1})
(\omega_i^* \hat{h}_i) l_i^*$, \newline $\mathbb{E}_{H|\hat{H}}
\mathrm{nr} = \overline{\mathrm{nr}} = 1 + \frac{PK}{s}(1-D) + PD
\frac{K}{K-1}\frac{s-1}{s}$, and \newline $ M_i =
\left(\overline{\mathrm{nr}} - \frac{PDK}{s(K-1)} \right)I_{i-1} -
\frac{PK}{s} \left(1-D - \frac{D}{K-1} \right) l_i l_i^*.$

We now need $l_i^* M_i^{-1}$. Note that $M_i$ is positive definite
and one of the eigenvectors of $M_i$ or $M_i^{-1}$ is $\tilde{l}_i$
while the rest are orthogonal to $\tilde{l}_i$. Hence, $l_i^*
M_i^{-1}$ is equal to $l_i^*$ times the eigenvalue of $M_i^{-1}$
corresponding to $\tilde{l}_i$ as the eigenvector. Therefore, we get
\begin{equation}
W_i = \frac{\frac{PK}{s} \Big( 1-D -\frac{D}{K-1} \Big)
(\omega_i^*\hat{h}_i)l_i^*}{1 + \frac{PK}{s} \Big(1-D -\frac{D}{K-1}
\Big) |\hat{h}^* \omega_i|^2 + P D \frac{K}{K-1} }. \label{Optimal
Choice W_i closed-form}
\end{equation}

\subsection{Analysis of $(1 + ||W_i||^2)$}
We have $||l_i||^2 = \sum_{j<i} |\hat{h}_i^* \omega_j|^2 = 1 -
|\hat{h}_i^* \omega_i|^2$. Further $|\hat{h}_i^* \omega_i|^2 \to (1
- \bar{i}\bar{s})$ a.s. Let  $ x^{\infty}(\bar{i})$ denote the limit
of $(1 + ||W_i||^2)$. Then we can easily obtain the following:
\begin{eqnarray*}
x^{\infty}(\bar{i}) = 1 + \frac{(\frac{P}{\bar{s}})^2 (1-\bar{D})^2
(1 - \bar{i}\bar{s}) (\bar{i}\bar{s}) }{\left(1 + P\bar{D} +
\frac{P}{\bar{s}} (1 - \bar{D}) (1- \bar{i}\bar{s}) \right)^2}
\mbox{ a.s.}
\end{eqnarray*}

\subsection{Analysis of $f_i := \frac{1}{s} \Big|h_i^* (\omega_i + [\omega_1 \cdots \omega_{i-1}]W_i^*)\Big|^2$}
From equation (\ref{Optimal Choice W_i closed-form}), we can write
$W_i^* = (\hat{h}_i^* \omega_i) c_i l_i$ for an appropriately chosen
scalar $c_i$. Then
\begin{eqnarray*}
f_i & = & \hspace{-7pt} \frac{1}{s} \Big|h_i^*[\omega_1 \cdots
\omega_{i-1}] (\hat{h}_i^* \omega_i) c_i l_i + h_i^* \omega_i \Big|^2 \nonumber \\
& = & \hspace{-7pt} \frac{||h_i||^2}{s|\hat{h}_i^* \omega_i|^2}
\Big| \tilde{h}_i^* [ \omega_1 \cdots \omega_{i-1}] l_i c_i
|\hat{h}_i^* \omega_i|^2 + \tilde{h}_i^* \omega_i \omega_i^*
\hat{h}_i \Big|^2 \hspace{-2pt} . \label{eq: subsecF main term}
\end{eqnarray*}
Let us first analyze the terms $A = \tilde{h}_i^* [ \omega_1 \cdots
\omega_{i-1} ] l_i$ and $B = \tilde{h}_i^* \omega_i \omega_i^*
\hat{h}_i $ in the above equation. To this end, we have
\begin{eqnarray*}
\lefteqn{ \hspace{8pt} A = \tilde{h}_i^* \big[\hat{h}_i \hspace{7pt}
\hat{p}_i \big] \hspace{4pt} \big[\hat{h}_i \hspace{7pt} \hat{p}_i
\big]^* \hspace{4pt} \big[ \omega_1 \cdots \omega_{i-1} \big] l_i } \nonumber \\
&& {} \hspace{-5pt}  =  \tilde{h}_i^* \hat{h}_i \hat{h}_i^*
\hspace{2pt} \big[\omega_1 \cdots \omega_{i-1} \big] l_i +
\tilde{h}_i^* \hat{p}_i \hat{p}_i^* \hspace{2pt} \big[
\omega_1 \cdots \omega_{i-1} \big] l_i, \\
&& {} \hspace{-32pt} \mbox{and } B = \tilde{h}_i^* \hat{h}_i
\hat{h}_i^* \omega_i \omega_i^* \hat{h}_i + \tilde{h}_i^* \hat{p}_i
\hat{p}_i^* \omega_i \omega_i^* \hat{h}_i.
\end{eqnarray*}
Using the arguments developed in Subsection \ref{SubSec: Term 1 of
nr} Part 2), it can be proved that the terms $\tilde{h}_i^*
\hat{p}_i \hat{p}_i^* \hspace{2pt} \big[ \omega_1 \cdots
\omega_{i-1} \big] l_i$ and $\tilde{h}_i^* \hat{p}_i \hat{p}_i^*
\omega_i \omega_i^* \hat{h}_i$ converge to zero a.s. Hence, we get
\begin{eqnarray*}
\lefteqn{ \hspace{-0cm} \lim_K \Big|A c_i |\hat{h}_i^* \omega_i|^2 + B \Big|^2} \\
& = &  \lim_K \Big| \tilde{h}_i^* \hat{h}_i
\hat{h}_i^* [\omega_1 \cdots \omega_{i-1}] l_i c_i |\hat{h}_i^*
\omega_i|^2 + \tilde{h}_i^* \hat{h}_i \hat{h}_i^*
\omega_i \omega_i^* \hat{h}_i \Big|^2 \\
& = & \lim_K |\tilde{h}_i^* \hat{h}_i|^2 \cdot |\hat{h}_i^*
\omega_i|^4 \cdot \Big| c_i ||l_i||^2 + 1 \Big|^2.
\end{eqnarray*}

Now, $ |\tilde{h}_i^* \hat{h}_i|^2 \to (1 - \bar{D})$ a.s.,
$|\hat{h}_i^* \omega_i|^4 \to (1-\bar{i}\bar{s})^2$ a. s., and
$||l_i||^2 \to (\bar{i}\bar{s})$. The limiting value of $c_i$,
denoted by $c^{\infty} (\bar{i})$, can be easily obtained from the
results of the previous subsection (not shown here explicitly).
Putting together these results, we obtain $ f_i \to
\frac{1}{\bar{s}} (1-\bar{D})(1-\bar{i}\bar{s}) \big(c^{\infty}
({\bar{i}}) \cdot \bar{i}\bar{s} + 1 \big)^2 \mbox{ a.s.}$

Using all the above results, we obtain the asymptotic throughput
$\rho(P,\bar{s},\bar{r})$ as given by equation (\ref{Asymp. Through.
Wopt}).

In the simpler case of perfect CSIT (i.e., as $\bar{r} \to \infty$),
the expression for the asymptotic throughput obtained here reduces
to $\rho (P,\bar{s},\infty) =  \bar{s} \int_0^1 \log \left\{1 +
\frac{1}{\bar{s}}P(1-\bar{i}\bar{s}) \right\} \,d\bar{i}$, which is
same as what one would obtain by specializing the result of
\cite{Caire} to the `on-off'-type power policy.

\section{Numerical Results}
Let the optimal value of $\bar{s}$ maximizing $\rho$ for the given
value of $P$ and $\bar{r}$ be $\bar{s}_{opt}$. Let $\rho_{opt}
(P,\bar{r}) := \rho(P,\bar{s}_{opt},\bar{r})$. The asymptotic
throughput for ZFBF is obtained from \cite{Wei_GBC}.

\begin{figure}
\centering
\includegraphics[height=2.2in,width=3.3in]{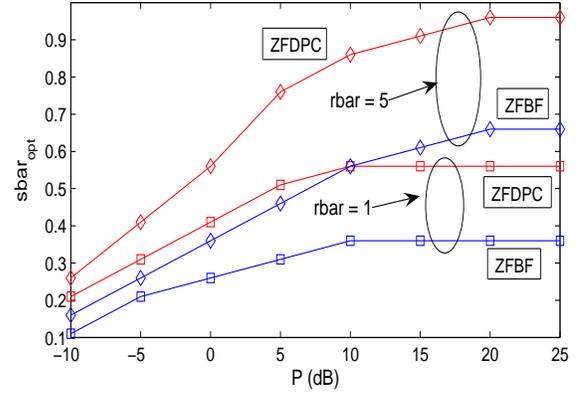}
\caption{$\bar{s}_{opt}$ vs. P for $\bar{r} = 1$ and $5$.}
\label{sbar_opt_VS_P_fixrbar}
\end{figure}

In Fig. \ref{sbar_opt_VS_P_fixrbar}, we plot $\bar{s}_{opt}$ as
function of $P$ for $\bar{r} = 1$, $5$. It can be seen that for
lower values of $P$, $\bar{s}_{opt}$ increases with $P$. This
behavior can be understood by noting that if the increased power is
allocated to only a few users, then there would be a `logarithmic'
increase in the sum-rate; however, if the power is distributed
across more users then the `pre-log' factor gets improved (i.e.,
more users contribute to the sum-rate). However, increasing
$\bar{s}$ also increases the inter-user interference. Hence, at
higher values of $P$, $\bar{s}_{opt}$ becomes constant. Next,
$\bar{s}_{opt}$ increases with $\bar{r}$ because the inter-user
interference reduces with increasing $\bar{r}$. Finally, note that
$\bar{s}_{opt}$ for ZFDPC is higher than that for ZFBF. As we would
discuss below, ZFDPC manages the channel gain of the useful signal
and the inter-user interference more efficiently than ZFBF and
hence, $\bar{s}_{opt}$ corresponding to it turns out to be higher.

\begin{figure}
\centering
\includegraphics[height=2.2in,width=3.2in]{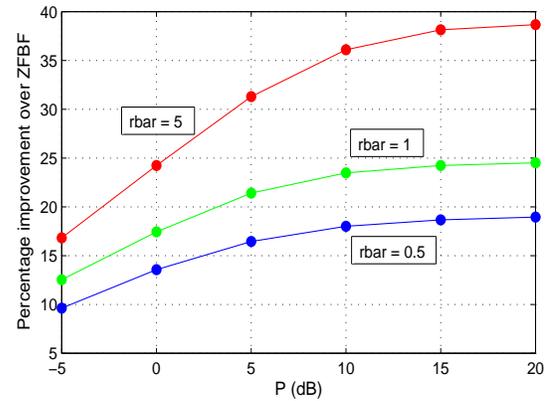}
\caption{Percentage improvement in the asymptotic throughput
achieved with ZFDPC over ZFBF.} \label{ImprovementOver_ZFBF}
\end{figure}

In Fig. \ref{ImprovementOver_ZFBF}, we compare the asymptotic
throughput obtained using ZFDPC and ZFBF. The numerical results in
this figure pertain to $\rho_{opt}$. Here we plot the percentage
improvement achieved using ZFDPC over ZFBF against $P$ for three
values of $\bar{r}$. We can see that ZFDPC achieves a considerably
higher throughput than ZFBF at all values of $P$ and $\bar{r}$. Note
that even for $\bar{r}$ as low as $0.5$, the percentage improvement is in
the range of $10 \%$ to $20\%$, which is significant.

\begin{figure}
\centering
\includegraphics[height=2.1in,width=3.25in]{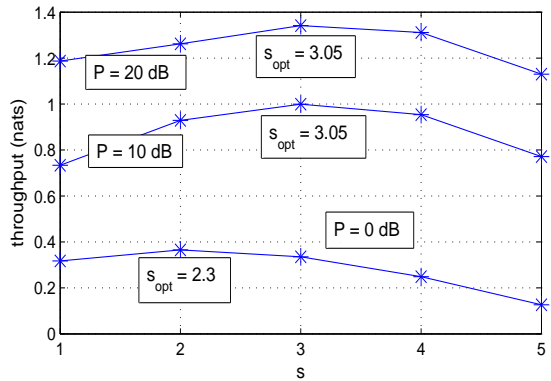}
\caption{Throughput achieved using ZFDPC vs. $P$ for $K=5$ and $r =
10$.} \label{rbar2_L5_rho_new}
\end{figure}

Let us now examine the differences between ZFDPC and ZFBF. Consider
the $i^{th}$ user. Under ZFDPC, the interference due to users $1$ to
$i-1$ is canceled by DPC. This can be accomplished for any given
choice of the BF vectors of users $1$ to $i$, as long as the
inflation factor $W_i$ is chosen in the appropriate manner. The BF
vectors are chosen under ZFDPC in such a way that the interference
due to users $i+1$ to $s$ gets zeroforced at the $i^{th}$ user. In
contrast to this, under ZFBF, the BF vectors are selected so as to
zeroforce the interference due to all other users. With this
background, let us now analyze the channel gain of the useful
signal, i.e., the term $\frac{1}{s} |h_i^* \omega_i|^2$. Under
ZFDPC, it is proportional the $\frac{1}{s} \chi^2_{2(K-i+1)}$ RV
(recall that the other additive terms converge to zero in limit),
whereas under ZFBF, it is proportional to $\frac{1}{s}
\chi^2_{2(K-s+1)}$ RV $\forall i$ \cite{Wei_GBC}. Thus, except for
the user $i=s$, every other user experiences a stronger channel
under ZFDPC. In other words, DPC (or ZFDPC) manages
the channel gain of the useful signal and the interference together
more effectively than ZFBF. It was known that, due to these
differences, ZFDPC outperforms ZFBF under perfect CSIT \cite{Caire}.
In the light of the results obtained here, we conclude that the same
is true even under imperfect CSIT as well. Lastly, it must be noted that
with ZFBF, the asymptotic throughput is zero when $\bar{s}=1$, i.e.,
$\rho(P,1,\bar{r}) =0$ $\forall P$, $\bar{r}$. Note that for ZFDPC,
$\rho(P,1,\bar{r})$ is comparable to $\rho_{opt}(P,\bar{r})$. This
behavior can be easily understood by noting the distribution of the
channel gain of the useful signal under two transmission schemes.

Consider now Fig. \ref{rbar2_L5_rho_new}. Here, we plot the
throughput (i.e., the sum-rate normalized by $K$) for the GBC with
$K=5$ and $r = 10$. We see from the figure that by optimizing over
the number of users, an improvement of about $0.2$ nats can be
obtained (at all values of $P$), over the simple solution of
transmitting to all $K$ users. This improvement is quite
significant, especially at $P = 0$ dB and $P = 10$ dB. The question
now is how to determine the optimal number of `on' users in the
$K$-dimensional GBC for a given $P$ and $r$. Instead of performing
an exhaustive search over $s$, we propose a simple and
computationally efficient approach. First find $\bar{s}_{opt}$ for
the given $P$ and $\bar{r} = \frac{r}{K}$. Then we suggest that for
the $K$-dimensional GBC, select $s_{opt} := \bar{s}_{opt} K$
(rounded to the nearest integer) number of users. In Fig.
\ref{rbar2_L5_rho_new}, we see that the maximum value of the
normalized throughput is indeed attained at $s_{opt}$ (rounded to
the nearest integer). We have observed this simple method to work
quite accurately, even for the relatively small values of $K$ ($K=5$
here). Note that the method suggested for the ZFBF in \cite{Wei_GBC}
using their large system analysis
for selecting the number of users is more complicated than the one
suggested here but seems to provide no particular benefit over
this simple approach.

\section{Conclusion}
We provide a large-system analysis of the GBC with finite-rate
feedback and derive a closed-form expression for the asymptotic
throughput achievable using ZFDPC. Using this result, we show that
the DPC-based scheme achieves a significantly higher throughput than
ZFBF. For the first time, DPC is shown to have a better performance,
under imperfect CSIT. Also, using the asymptotic throughput
expression, we address the problem of optimizing over the number of
`on' users.

\bibliographystyle{IEEEtran}
\bibliography{Revisedv2_LargeSystem_BC}

\begin{thebibliography}{10}
\providecommand{\url}[1]{#1}
\csname url@samestyle\endcsname
\providecommand{\newblock}{\relax}
\providecommand{\bibinfo}[2]{#2}
\providecommand{\BIBentrySTDinterwordspacing}{\spaceskip=0pt\relax}
\providecommand{\BIBentryALTinterwordstretchfactor}{4}
\providecommand{\BIBentryALTinterwordspacing}{\spaceskip=\fontdimen2\font plus
\BIBentryALTinterwordstretchfactor\fontdimen3\font minus
  \fontdimen4\font\relax}
\providecommand{\BIBforeignlanguage}[2]{{%
\expandafter\ifx\csname l@#1\endcsname\relax
\typeout{** WARNING: IEEEtran.bst: No hyphenation pattern has been}%
\typeout{** loaded for the language `#1'. Using the pattern for}%
\typeout{** the default language instead.}%
\else
\language=\csname l@#1\endcsname
\fi
#2}}
\providecommand{\BIBdecl}{\relax}
\BIBdecl

\bibitem{Costa}
M.Costa, ``Writing on dirty paper,'' \emph{IEEE Trans. Inform. Theory}, vol.
  29, no. 3, pp. 439--441, May 1983.

\bibitem{Shamai-W-S}
H.~Weingarten, Y.~Steinberg, and S.~Shamai, ``The capacity region of
  multiple-input multiple-output broadcast channels,'' \emph{IEEE Trans.
  Inform. Theory}, vol. 52, no. 9, pp. 3936--3964, Sep. 2006.

\bibitem{Caire}
G.~Caire and S.~Shamai, ``On the achievable throughput of a multiantenna
  gaussian broadcast channel,'' \emph{IEEE. Trans. Inform. Theory}, vol. 49,
  no. 7, pp. 1691--1706, Jul. 2003.

\bibitem{Jindal}
N.~Jindal, ``M{I}{M}{O} broadcast channels with finite rate feedback,''
  \emph{IEEE Trans. Inform. Theory}, vol. 52, no. 11, pp. 5045--5060, Nov.
  2006.

\bibitem{Vaze_Dof_final}
C.~S. Vaze and M.~K. Varanasi, ``The degrees of freedom regions of {M}{I}{M}{O}
  broadcast, interference, and cognitive radio channels with no {C}{S}{I}{T},''
  Sep. 2009, Available Online: http://arxiv.org/abs/0909.5424.

\bibitem{Wei_GBC}
W.~Dai, Y.~Liu, B.~Rider, and W.~Gao, ``How many users should be turned on in a
  multi-antenna broadcast channel?'' \emph{IEEE Journal on Sel. Areas of
  Comm.}, vol. 26, Issue 8, pp. 1526--1535, Oct. 2008.

\bibitem{Vaze}
C.~S. Vaze and M.~K. Varanasi, ``Dirty paper coding for fading channels with
  partial transmitter side information,'' in \emph{Asilomar Conference on
  Signals, Systems, and Computers}, Pacific Grove, USA, Oct. 2008, pp.
  341--345.

\bibitem{Vaze2}
------, ``On the achievable rate of the fading dirty paper channel with
  imperfect {C}{S}{I}{T},'' in \emph{43rd Annual Conference on Information
  Sciences and Systems}, John Hopkins University, Baltimore, USA, Mar. 2009,
  pp. 346--351.

\bibitem{Vaze_DPC_GBC}
------, ``Dirty paper coding for the {M}{I}{M}{O} gaussian broadcast channels
  with imperfect {C}{S}{I}{T},'' \emph{to be submitted to IEEE Trans. Inform.
  Theory}, 2010, under preparation.

\bibitem{Vaze_fb_scaling_GBC}
------, ``On the scaling of feedback bits to achieve the full multiplexing gain
  over the gaussian broadcast channel using {D}{P}{C},'' in \emph{submitted to
  IEEE Intern. Symp. Inform. Theory}, Jun. 2010.

\bibitem{TulinoVerdu}
A.~M. Tulino and S.~Verdu, \emph{Random Matrix Theory and Wireless
  Communications}.\hskip 1em plus 0.5em minus 0.4em\relax Foundations and
  Trends in Communications and Information Theory, Vol. 1, Issue 1, NOW
  publishers.

\bibitem{Pillai}
A.~Papoulis and S.~U. Pillai, \emph{Probability, Random Variables and
  Stochastic Processes}.\hskip 1em plus 0.5em minus 0.4em\relax McGraw-Hill
  Publishing Company, 2002.

\bibitem{Mukkavilli}
K.~K. Mukkavilli, A.~Sabharwal, E.~Erkip, and B.~Aazhang, ``On beamforming with
  finite rate feedback in multiple-antenna systems,'' \emph{IEEE Trans. Inform.
  Theory}, vol. 49, no. 10, pp. 2562--2579, Oct. 2003.

\bibitem{YJ}
T.~Yoo, N.~Jindal, and A.~Goldsmith, ``Multi-antenna downlink channels with
  limited feedback and user selection,'' \emph{IEEE Journal on Selected Areas
  in Communication}, vol. 25, no. 7, pp. 1478--1491, Sep. 2007.

\bibitem{Horn-Johnson}
R.~A. Horn and C.~R. Johnson, \emph{Matrix Analysis}.\hskip 1em plus 0.5em
  minus 0.4em\relax Cambridge Univ. Press, 1985.

\end{thebibliography}
\end{document}